\newcommand{\be}{\begin{equation}}
\newcommand{\en}{\end{equation}}
\newcommand{\F}{\mathcal{F}}
\newcommand{\Lc}{\mathcal{L}}
\newcommand{\1}{1 \!\! 1}
\newcommand{\Hil}{\mathcal{H}}
\newtheorem{thm}{Theorem}
\newtheorem{remark}{Remark}
\newtheorem{example}{Example}
\newenvironment{proof}{\noindent {\bf Proof.}}{\hfill$\square$ \vspace{3mm}\endtrivlist}
\begin{document}

\title{Some classes of finite-dimensional \\ladder operators\footnote{To appear on Journal of Physics A: Mathematical and Theoretical, 2025.}}

\author{Fabio Bagarello\footnote{Dipartimento di Ingegneria, Universit\`a di Palermo,
Viale delle Scienze, Edificio 8, I--90128  Palermo, Italy; 
I.N.F.N., Sezione di Catania, Italy. fabio.bagarello@unipa.it}, 
Antonino Faddetta\footnote{Dipartimento di Fisica e Chimica, Universit\`a di Palermo, Via Archirafi 36, I--90128  Palermo, Italy. antonino.faddetta@community.unipa.it}, 
Francesco Oliveri\footnote{Dipartimento di Scienze Matematiche e Informatiche, Scienze Fisiche e Scienze della Terra,  Universit\`a di Messina, Viale F. Stagno d'Alcontres 31, I--98166 Messina, Italy. francesco.oliveri@unime.it}}

\maketitle

\begin{abstract}
We introduce and study some special classes of ladder operators in finite-dimensional Hilbert spaces. In particular we consider a truncated version of quons, their {\em psudo-}version, and a third family of operators acting on a closed chain. In this latter situation, we discuss the existence of what could be considered {\em discrete coherent states}, as suitable eigenvectors of the annihilation operator of the chain. We see that, under reasonable assumptions, a resolution of the identity can be recovered, involving these states, together with a biorthogonal family of vectors, which turn out to be eigenstates of the raising operator of the chain.
\end{abstract}

\noindent{\it Keywords}: quons, ladder operators, closed chain, coherent states.

\noindent {\it MSC classification}: 46N50, 81Q12.

\section{Introduction}\label{sect1}

The role of ladder operators is very useful in many fields of physics, from quantum mechanics to quantum field theory, passing through many body problem and complex systems. There are hundreds of books dealing with this class of operators, both in view of their applications but also because they produce a beautiful mathematics. We cite here only few monographs on this topics \cite{mer,mess,cohen,gazeaubook}, but many others can be easily found, with focus on different aspects of ladder operators, different extensions, and different realms of applications. 

Bosons and fermions are the two classes of ladders which are likely the most known since the beginning of quantum mechanics, the reason being that they describe interactions between particles (the bosons), and the particles themselves (the fermions). We could think of photons and  electrons, respectively, just to give two concrete examples of  bosons and fermions, respectively. Nevertheless,  there are many other classes of ladder operators in the literature. Quons are one such class \cite{moh,gre}. Abstract ladder operators \cite{fern} produce another class. And there are more: those related to the angular momentum \cite{mer,mess,cohen}, or those arising from the generalized Heisenberg algebra \cite{curado}. And more. Some of them live in finite-dimensional, others in infinite-dimensional Hilbert spaces. This aspect is related to the explicit definition of the operators. Moreover, some operators originally defined in infinite-dimensional Hilbert spaces admit a counterpart in finite-dimensional ones. This {\em restriction} can be motivated by physical reasons: for instance, bosons are used in the diagonalization of the quantum harmonic oscillator. Its Hamiltonian, $H_{qo}$, admits an infinite set of (equally spaced) eigenvalues and an infinite set of mutually orthogonal eigenvectors. Anyway, it is clear that 
in a realistic situation not all the energetic levels can be reached by the oscillator. Only a (possibly large) finite number of levels can be occupied and, therefore, the \emph{original} Hilbert space $\Lc^2(\mathbb{R})$ is \emph{too big}! It is sufficient to consider the subspace of $\Lc^2(\mathbb{R})$ which is generated by a finite number of eigenvectors of $H_{qo}$, say those corresponding to the energetic levels which can really be occupied. This kind of restriction is discussed, for instance, in \cite{buc,bagchi,bag2013}, where {\em truncated bosons} have been analyzed. We should also mention that all the ladder operators named so far can be extended or deformed in such a way that the raising operator is {\bf not} necessarily the adjoint of the lowering operator, as it often happens in quantum mechanical systems. A recent overview of many of these extensions is contained in \cite{bagbook}.

The main focus in this paper is to consider families of ladders which live in finite-dimensional Hilbert spaces different from those considered so far in the existing literature, and to deduce their main properties. In particular, we will concentrate on a finite-dimensional version of quons, and on their pseudo-deformations. Then, we will discuss also ladder operators on a closed chain. More in detail, the paper is organized as follows. In Section~\ref{sect2}, we will introduce the {\em truncated quons} (TQs), whereas in Subsection~\ref{sect3} we will deform these TQs, so getting truncated pseudo-quons (TPQs) similar to those considered for infinite-dimensional Hilbert spaces in \cite{bagbook}. Section~\ref{sect4} is devoted to the analysis of ladders on a closed chain, with a special focus on the possibility of defining some sort of coherent states for the chain. Section~\ref{sect5} contains our conclusions and plans for the future. To keep the paper self-contained, we list some results on quons in the Appendix.

\section{From bosons to TQs}
\label{sect2}

Our main starting ingredient is an operator $a$ satisfying, together with its adjoint $a^\dagger$, the  canonical commutation relation (CCR) $[a,a^\dagger]=\1$. Here, $\1$ is the identity operator on $\Hil$, the Hilbert space where $a$ and $a^\dagger$ act. It is well known that (i) $\Hil$ is necessarily infinite-dimensional and that (ii) $a$ and $a^\dagger$ are unbounded. An important consequence of the CCR is that, calling $N_a=a^\dagger a$, then $[N_a,a]=-a$. This is one of the key identities needed to use $a$ and $a^\dagger$ to diagonalize, in a purely algebraic way, the Hamiltonian of a harmonic oscillator.

In \cite{buc,bagchi}, it has been shown that it is possible to construct a sort of truncated version of the operators $a, a^\dagger$ which live in finite dimensional Hilbert spaces. For that, a possible approach consists in replacing the CCR with a different commutation rule. Indeed, let us consider an operator $A$ which satisfies, together with $A^\dagger$, the commutation relation
\be
[A,A^\dagger]=\1_L-(L+1)K_0,
\label{21}
\en
where $0<L<\infty$, $\1_L$ is the $(L+1)\times(L+1)$ identity matrix, and $K_0$ is an orthogonal projector ($K_0=K_0^2=K_0^\dagger$) satisfying $K_0A=0$. In this case it is possible to prove that $A$ admits a representation as a matrix of order $(L+1)$. In other words, $A$ is an operator acting on $\Hil_L$, where $\dim(\Hil_L)=L+1$, and $A^\dagger$ is the adjoint of $A$ under the usual scalar product in $\Hil_L$. We stress once more that this finite-dimensional representation is not possible if the {\em correction} of the CCR proportional to $K_0$ is missing in (\ref{21}).

To be more explicit, we recall that the original bosonic operator $a$ satisfies the lowering property on $\Hil$
\be 
ae_{n}=\sqrt{n}\,e_{n-1}, \, n\geq1, \qquad \mbox{ where  }\qquad  ae_{0}=0,
\label{22}
\en
with its standard representation
\be
a=\sum_{n=0}^\infty\sqrt{n+1}\,|e_{n}\rangle \langle e_{n+1}|.
\label{23}
\en
This series converges on $\Lc_e=l.s\{e_n\}$, the linear span of the vectors $e_n$, where 
$\F_e=\{e_n, \,n\geq0\}$ is the orthonormal (o.n.) basis of $\Hil$ constructed in the standard way: $e_0$ is a vector in $\Hil$ satisfying $ae_0=0$, and $e_n=\frac{1}{\sqrt{n!}}\,(a^\dagger)^ne_0$, $n\geq1$. It is well known that,  computing $[a,a^\dagger]$ (in the sense of unbounded operators), we get the identity operator on $\Hil$. Indeed, using a formal approach (which, however, could be made rigorous), we find that
$$
[a,a^\dagger]=\sum_{n=0}^\infty\,|e_{n}\rangle \langle e_{n}|=\1,
$$
since $\F_e$ is an o.n basis for $\Hil$, and therefore resolves the identity. An interesting feature of $a$ is that it admits, among the others, the following (infinite-dimensional) matrix representation:
$$
a=\left(
\begin{array}{cccccc}
0 & 1 & 0 & 0 & 0 & \ldots \\
0 & 0 & \sqrt{2} & 0 & 0 & \ldots \\
0 & 0 & 0 & \sqrt{3} & 0 & \ldots \\
0 & 0 & 0 & 0 & \sqrt{4} & \ldots \\
\vdots & \vdots & \vdots & \vdots & \vdots & \vdots 
\end{array}
\right).
$$

Let us now consider a sort of cut-off on $a$, replacing the infinite series in (\ref{23}) with the finite sum
\be
A=\sum_{n=0}^{L-1}\sqrt{n+1}\,|e_{n}\rangle \langle e_{n+1}|, 
\label{24}
\en
so that  $\displaystyle A^\dagger=\sum_{n=0}^{L-1}\sqrt{n+1}\,|e_{n+1}\rangle \langle e_{n}|$, where $e_n\in\F_e$. It is possible to check that (\ref{24}) produces an explicit realization of (\ref{21}), and that $A$, $A^\dagger$, and other relevant operators, can all be represented in a Hilbert space $\Hil_L\subset\Hil$ for which $\F_e(L)=\{e_n, n=0,1,2,\ldots,L\}$, finite subset of $\F_e$, is an o.n. basis. $\Hil_L$ can be seen as the set of all the vectors $f$ of $\Hil$ for which, when expanded in terms of $\F_e$, we get $\displaystyle f=\sum_{n=0}^\infty c_ne_n$ with $c_n=0$ for all $n> L$. Now we can easily check that
$$
[A,A^\dagger]=\sum_{n=0}^{L}|e_{n}\rangle \langle e_{n}|-(L+1)|e_{L}\rangle \langle e_{L}|,
$$
which is exactly formula (\ref{21}) with $\displaystyle \sum_{n=0}^{L}|e_{n}\rangle \langle e_{n}|=\1_L$, the identity in $\Hil_L$ (with dimension $L+1$), and $K_0=|e_{L}\rangle \langle e_{L}|$. Notice that, as required, $K_0=K_0^\dagger=K_0^2$. The matrix expression of $A$ is
$$
A=\left(
\begin{array}{cccccccc}
	0 & 1 & 0 & 0 & 0 & \cdots & \cdots & 0 \\
	0 & 0 & \sqrt{2} & 0 & 0 & \cdots & \cdots & 0 \\
	0 & 0 & 0 & \sqrt{3} & 0 & \cdots & \cdots & 0 \\
	0 & 0 & 0 & 0 & \sqrt{4} & \cdots & \cdots & 0 \\
	\vdots & \vdots & \vdots & \vdots & \vdots & \vdots & \vdots &0 \\
	0 & 0 & 0 & 0 & 0 & 0 & \cdots & \sqrt{L} \\
	0 & 0 & 0 & 0 & 0 & 0 & \cdots & 0 \\
\end{array}
\right),
$$
which looks as a truncated version of the matrix expression for $a$.

Hereafter, what we are interested to is the possibility of repeating a similar construction for quons (see Appendix), in particular to understand if the approach described here for bosons is the only one, or if some similar, but possibly slightly different, technique can also be proposed.

More explicitly, adopting the same notation as in Appendix, our main effort here is to understand if the operator $c$ satisfying, as in (\ref{a1}), $[c,c^\dagger]_q:=c c^\dagger -q c^\dagger c=\1$, $q\in[-1,1]$, can be somehow {\em replaced} by a different operator, $C$, satisfying a sort of generalized version of (\ref{21}): 
\be
[C,C^\dagger]_q=\1_L-(L+1)K,
\label{25}\en
where $0<L<\infty$ is again connected to the dimensionality of a suitable Hilbert space where the operators $C$ and $C^\dagger$ can be represented as $(L+1)\times (L+1)$ matrices, and $K$ is some operator required to get this finite-dimensional representation, as for the bosonic case. 

\begin{remark} 
\label{remark1}
It is useful to prove that $[c,c^\dagger]_q=\1$ can be defined on some finite-dimensional Hilbert space only in few  very special cases. Indeed, if we assume that $[c,c^\dagger]_q=\1$ can be implemented in some $\Hil_L$, with dimension $L+1<\infty$,  then taking the trace of both sides of the identity $[c,c^\dagger]_q=\1$, and using the properties of the trace, we would get $(1-q)\,tr(c^\dagger c)=L+1$. Now, since $c^\dagger c$ is a positive operator, its trace is non negative, and it is strictly positive if we exclude the trivial case that $c^\dagger c=0$. Hence, $q=1$ is not compatible with this result (in agreement with what we know for bosons). Now, if we assume first that $c$ is described by a truncated $(L+1)\times(L+1)$ version of (\ref{a2}) with $\beta_n$ given in (\ref{a5}), we see that $\displaystyle\hbox{tr}(c^\dagger c)=\sum_{n=0}^{L-1}\beta_{n}^2$, and therefore, using (\ref{a5}), we should have
$$
\sum_{n=0}^{L-1}(1-q^{n+1})=L+1,
$$
which is in general not true for our values of $q$. The only cases where this equality can be satisfied is when $L$ is odd and $q=-1$. For $L=1$, we have the fermionic case, and our conclusion is, therefore, expected.

There is also another solution of $[c,c^\dagger]_q=\1$: in fact, if $q\in[-1,1[$, the $(L+1)\times(L+1)$ matrix
$$c=\frac{1}{\sqrt{1-q}}
\left(
\begin{array}{cccccc}
    0 & 1 & 0 & 0 & \ldots & 0\\
    0 & 0 & 1 & 0 & \ldots & 0\\
    0 & 0 & 0 & 1 & \ldots & 0\\
    \vdots & \vdots & \vdots & \vdots & \vdots & \vdots\\
    0 & 0 & 0 & 0 & \ldots & 1\\
    1 & 0 & 0 & 0 & \ldots & 0\\
\end{array}
\right).
$$
 satisfies $[c,c^\dagger]_q=\1_L$ in $\Hil_L$, for all finite integer $L$. However, this situation is not so interesting for us, since the matrices for different values of $q$ are simply proportional one to the other. In this sense, this particular solution is trivial.
\end{remark}

We assume here, in view of Remark~\ref{remark1} and of the  expression (\ref{a2}) for $c$, that $C$ admits the representation
\be
\label{26}
C=\left(
\begin{array}{ccccccccc}
	0 & \beta_0 & 0 & 0 & 0 & \cdots & \cdots & 0 \\
	0 & 0 & \beta_1 & 0 & 0 & \cdots & \cdots & 0 \\
	0 & 0 & 0 & \beta_2 & 0 & \cdots & \cdots & 0 \\
	0 & 0 & 0 & 0 & \beta_3 & \cdots & . & 0 \\
	\vdots & \vdots & \vdots & \vdots & \vdots & \vdots & \vdots & \vdots \\
	0 & 0 & 0 & 0 & 0 & 0 & \cdots & \beta_{L-1} \\
	0 & 0 & 0 & 0 & \cdots &\cdots &\cdots & 0 \\
\end{array}
\right),
\en
where $\beta_n$ here will be later compared with those given in (\ref{a5}); then, we look for the consequences of this ansatz. Before starting our analysis, it might be useful to stress that, for quons, the commutation rule $[N_a,a]=-a$ is replaced by a slightly different expression,
\be
[N_c,c]=-c+(1-q)N_cc,
\label{27}
\en
where we have introduced the quonic number-like operator $N_c=c^\dagger c$. Notice that the one here is the standard commutator, while, as clearly stated before, $[.,.]_q$ is the $q$-mutator, which reduces to $[.,.]$ only when $q=1$, and to the anti-commutator $\{.,.\}$ when $q=-1$. For the truncated version $C$, we will be interested to consider the following two questions:
\begin{enumerate}
\item   Find, if possible, conditions on $\beta_n$ which produce, in analogy with (\ref{27}),
\be
[N_C,C]=-C+(1-q)N_CC,
\label{28}
\en
where $N_C=C^\dagger C$. Once these expressions of $\beta_n$ are found, if possible, we will use (\ref{25}) to deduce the expression for $K$.	
In other words: is it possible to recover for the truncated quons the same commutation rule with the number-like operator as for ordinary quons?
\item Find, if possible, conditions for (\ref{28}) to reduce to a {\em boson-like} rule
\be
[N_C,C]=-C,
\label{29}
\en
while keeping $q\neq1$. This constraint on $q$ makes the question non trivial. Indeed, if $q=1$ formula (\ref{29}) is exactly what we do expect, and there is not much to say. Different, and possibly more interesting, is the situation when $q\neq1$.
\end{enumerate}

We will show in the rest of this Section that the answers to both these questions are affirmative. Let us first consider  condition (\ref{28}), with $C$ as in (\ref{26}). Restricting to $\beta_k\neq0$ for all $k$, it is easy to check that $\beta_k^2=1+q+\cdots q^k$. Notice that $\displaystyle \sum_{j=0}^kq^j\geq0$ for all allowed values of $q$. This is indeed not different from what we have found in (\ref{a5}): the expression of $\beta_k$ remains the same both in $\Hil$ and in $\Hil_L$ (and this is why we have used the same notation in both cases).

\begin{remark} 
It may be interesting to stress that other solutions for $\beta_k$ also exist such that (\ref{28}) is satisfied. In particular we could have $\beta_0=\beta_1=\ldots=\beta_{j_0}=0$, for some $j_0<L-1$. In this case $C$ annihilates a set of orthogonal vectors, and therefore $\dim(\ker(C))>1$. We will not consider this other possibility further in this paper.
\end{remark}

From now on, to fix the ideas, we only consider  $\beta_k>0$: if $\beta_k=\sqrt{\sum_{j=0}^kq^j}$, we can deduce the expression of $K$ inverting (\ref{25}). We get 
\be
\label{210}
K=\left(
\begin{array}{ccccc}
0 & 0 & \ldots & 0 &  0\\
0 & 0 & \ldots & 0 &  0\\
\vdots & \vdots & \ddots & \vdots &\vdots\\
0 & 0 & \ldots & 0 & 0\\
0 & 0 & \ldots & 0 & \frac{\beta^2_{L}}{L+1}
\end{array}
\right),
\en 
which, as we can see, has a single non vanishing entry in the $(L+1,L+1)$ position. Incidentally we observe that $KC=0$ and that $K=K^\dagger$. However, $K=K^2$ is possible only if $q=1$. Otherwise $K\neq K^2$. This is similar, but not identical, to what we have found for the truncated bosons, and for $K_0$.

It is interesting to look at the same problem from a more abstract point of view, \emph{i.e.}, without assuming that $C$ has the form in (\ref{26}), with $\beta_k$ as above. In particular, if we assume that $C$ obeys (\ref{25}), and that $KC=0$, it is easy to check that $[N_C,C]$ is the one in (\ref{28}), independently of the explicit representation of $C$.

We can rewrite the operators $C$ and $K$ in terms of rank one operators as follows:
\be
\label{211}
K=\frac{\beta^2_{L}}{L+1}|e_{L}\rangle \langle e_{L}|\qquad\hbox{and}\qquad C=\sum_{n=0}^{L-1}\,\beta_n\,|e_{n}\rangle \langle e_{n+1}|.
\en
Here $e_n$ is the $n$-th vector of $\F_e(L)\subset\F_e$, the same sets we have introduced for bosons and truncated bosons.
We observe that, for $q=1$, $K$ coincides with $K_0$ and the operator $C$ becomes the analogous operator $A$ for the truncated bosons. If $q=-1$ and $L=1$ we recover fermionic operators. If $L\geq2$ and $q=-1$ we get extended fermionic operators, similar to those considered in \cite{bag2013}: in particular, the expression above for $K$ shows that $K=0$ for odd $L$, while $K\neq0$ for even $L$. 

Formula (\ref{26}) easily produces the following lowering and raising equations for the operators $C$ and  $C^\dagger$, respectively:
\be
\left\{\begin{array}{ll}
Ce_{n}=\beta_{n-1}\,e_{n-1}, \   \,\,\,\quad 0\leq n\leq L, \qquad  \beta_{-1}=0,\\
\\
C^\dagger e_{n}=\beta_n\,e_{n+1}, \qquad 0\leq n\leq L-1,
\end{array}\right.
\label{212}
\en
which can be extended to all of $\Hil_L$ by linearity. This is not a problem, here, since $L<\infty$, and $C$ and $C^\dagger$ are obviously bounded. From (\ref{212}) we also deduce the  following eigenvalue equations for $N_C$: $N_C\,e_n=\beta^2_{n-1}\,e_n$ for $n=0,1,2,\ldots,L$.

The vectors $e_n$, $n=1,2,\ldots,L$, are related to $e_0$ as follows:
\be
\label{213}
e_n:=\frac{(C^\dagger)^n}{\beta_{n-1}!}e_0,\quad n=0,\dots,L.
\en
Hence, $C^\dagger$ is a raising operator, satisfying $(C^\dagger)^{L+1}=0$.  The main difference with respect to the truncated bosons relies, as it is clear, in the normalization factor $\beta_{n-1}!$, which reduces to $\sqrt{n!}$ when $q=1$. We refer to the Appendix for the meaning of  $\beta_{n-1}!$, known sometimes as the $q$-factorial.

Next we go to formula (\ref{29}). We recall that what is interesting for us here is to check if (\ref{29}) is compatible with (\ref{25}) for $q\neq1$, and which are the properties of $K$ in this case. The approach is the same as above. In this case, using the expression in (\ref{26}) for $C$ in (\ref{29}), we find that the solution for $C$, restricting as before to $\beta_j>0$, reads
\be
\label{214}
C=\left(
\begin{array}{cccccc}
0 & 1 &0&0&\ldots&0\\
0&0&\sqrt{2}&0&\ldots&0 \\
0&0&0&\sqrt{3}&\ldots&0\\
\vdots&\vdots&\vdots&\vdots& \ddots&\vdots\\
0&0&0&0&\ldots&\sqrt{L}\\
0&0&0&0&\ldots&0
\end{array}
\right),
\en
which is exactly the matrix of the truncated bosons $A$ we have seen before. This does not fix $q=1$ necessarily. Indeed, inserting now this solution in (\ref{25}), we find that $K$ must be 
\be
\label{215}
K=\left(
\begin{array}{cccccc}
0 &0 &0&\ldots&0&0\\
0 & \frac{q-1}{L+1} &0&\cdots&0&0\\
0 & 0 &\frac{2(q-1)}{L+1}&\cdots&0&0\\
\vdots&\vdots&\vdots&\vdots& \vdots&\vdots\\
0&0&0&\cdots&\frac{(L-1)(q-1)}{L+1}&0\\
0&0&0& \cdots&0&\frac{1+Lq}{L+1}\\
\end{array}
\right),
\en
which is still diagonal but quite different from what we have found before, and which can be rewritten as
\be
\label{216}
K=\frac{q-1}{L+1}\sum_{n=0}^{L-1}n\,|e_{n}\rangle \langle  e_{n}|+\frac{1+qL}{L+1}\,|e_{L}\rangle \langle e_{L} |.
\en

It is particularly interesting to notice that $KC\neq0$. In fact, it turns out that  $\displaystyle KC=\frac{q-1}{L+1}N_CC$. This result can be somehow reversed: if we assume that $C$ obeys (\ref{25}) and that $\displaystyle KC=\frac{q-1}{L+1}N_CC$, then it is easy to verify that $[N_C,C]$ is the one in (\ref{29}), and the operators $C$ and $K$ have the expressions (\ref{214}) and (\ref{215}), respectively.

Again, we have lowering and raising properties for the operators $C$ and $C^\dagger$, but with purely bosonic weights:
\be
\left\{\begin{array}{ll}
Ce_{n}=\sqrt{n}\,e_{n-1}, \qquad 1\leq n\leq L, \qquad  Ce_{0}=0,\\
\\
C^\dagger e_{n}=\sqrt{n+1}\,e_{n+1}, \qquad 0\leq n\leq L-1,
\end{array}\right.
\label{217}
\en
and we recover the eigenvalue equation $N_C\,e_n=n\,e_n$ for $n=0,1,2,\ldots,L$. Also in this case, the vectors $e_n$ can be constructed out of $e_0$ as follows: 
\be
\label{218}
e_n:=\frac{(C^\dagger)^n}{\sqrt{n!}}e_0,\quad n=0,\dots,L.
\en

In conclusion,  we are back to truncated bosons, but these truncated bosons obey also a truncated 
$q$-mutation relation, see (\ref{25}), if $K$ is chosen as in (\ref{215}) and (\ref{216}). 

\subsection{Deformed TQs}\label{sect3}

In view of some older results in the literature (see \cite{bagbook}, and references therein), it might be interesting to sketch briefly if and how it is possible to deform TQs in order to obtain {\em truncated pseudo-quons} (TPQs), where the ladder operators $C$ and $C^\dagger$ are replaced by a pair $D$ and $G$ which are still ladder operators living in $\Hil_L$, but such that $G^\dagger\neq D$. Hereafter, we will propose a very simple (but still non trivial) procedure, similar to what has been done in the literature for \emph{regular} pseudo-bosons \cite{bagbook}. The main ingredients are the operators
 $C$ and $C^\dagger$, satisfying (\ref{25}), and where (\ref{26}) is the matrix form of $C$. Let  $R$ be an invertible operator on $\Hil_L$. We use $R$ and $R^{-1}$ to define $D$ and $G$ as follows:
\be
\label{31}
D=RCR^{-1}, \qquad G=RC^\dagger R^{-1}.
\en
It is clear that, if $R$ is not unitary, which is the case we will only consider here, $D\neq G^\dagger$. Using (\ref{25}) we see that $D$ and $G$ satisfy the following q-mutator: 
\be
[D,G]_q=R[C,C^\dagger]_qR^{-1}=R\left(\1_L-(L+1)K\right)R^{-1}=\1_L-(L+1)Q,
\label{32}
\en where $Q=RKR^{-1}$. It is clear that $Q^2=Q$ if $K^2=K$. This is not true, however, neither for the choice (\ref{210}) of $K$, nor for (\ref{215}), except when $q=1$.  Also, in general, $Q^\dagger\neq Q$. However, $Q^\dagger=Q$ if  $[K,R^\dagger R]=0$, since $K=K^\dagger$, at least for $K$ given in (\ref{210}) and (\ref{215}). 

Moreover, whenever $KC=0$ and $K=K^\dagger$ (which is what we have deduced out of (\ref{28})), we have $QD=RKCR^{-1}=0$, so that $D^\dagger Q^\dagger=0$. The same reasoning  implies that $Q^\dagger G^\dagger=0$, so that $GQ=0$ as well. 
	
Similarly, whenever $\displaystyle KC=\frac{q-1}{L+1}N_CC$ and $K=K^\dagger$ (which is what we have deduced from (\ref{29})),  introducing $\tilde{N}=GD$  we deduce that $\displaystyle QD=\frac{q-1}{L+1}\tilde{N}D$, implying that $\displaystyle D^\dagger Q^\dagger=\frac{q-1}{L+1}D^\dagger\tilde{N}^\dagger$,  and also $\displaystyle Q^\dagger G^\dagger=\frac{q-1}{L+1}\tilde{N}^\dagger G^\dagger$, leading to $\displaystyle GQ=\frac{q-1}{L+1}G\tilde{N}$.

\vspace{2mm}

\begin{example}
Here, we will show how the above operators look like, starting from fixed (but arbitrarily chosen) biorthogonal sets $\F_\varphi=\{\varphi_n,\, n=0,1,2,3\}$ and $\F_\psi=\{\psi_n,\,n=0,1,2,3\}$ for the Hilbert space  $\Hil_3=\mathbb{C}^4$, and using these sets as the main ingredient of all our construction. Let us consider
\be
\label{33b}
\qquad\varphi_0=\left(
\begin{array}{c}
    1\\ 0\\0\\0
\end{array}\right),
\quad
\varphi_1=\left(
\begin{array}{c}
    1\\ 1\\0\\0
\end{array}\right),\quad
\varphi_2=\left(
\begin{array}{c}
    0\\ 1\\1\\0
\end{array}\right),\quad
\varphi_3=\left(
\begin{array}{c}
    0\\ 0\\1\\1
\end{array}
\right),
\en
and
\be
\label{34b}
\qquad \psi_0=
\left(
\begin{array}{c}    
1\\ -1\\1\\-1
\end{array}\right),\quad
\psi_1=\left(
\begin{array}{c}    
 0\\ 1\\-1\\1
\end{array}\right),\quad
\psi_2=\left(
\begin{array}{c}    
    0\\ 0\\1\\-1
\end{array}\right),\quad
\psi_3=\left(
\begin{array}{c}    
    0\\ 0\\0\\1
\end{array}\right).
\en
It is clear that, in particular,  $\langle\varphi_n,\psi_m\rangle=\delta_{n,m}$. This implies that the four vectors in $\F_\varphi$ are linearly independent, and therefore that $\F_\varphi$ is a basis for $\Hil_3$. The same is true for $\F_\psi$. Hence these two sets solve the identity: $\displaystyle \sum_{n=0}^3|\psi_n\rangle\langle\varphi_n|=\sum_{n=0}^3|\varphi_n\rangle\langle\psi_n|=\1_3.$

Let us now fix $\beta_n\in\mathbb{R}, n=0,1,2$, and  define 
\be
\label{opef}
E=\sum_{n=0}^2\beta_{n}|\varphi_n\rangle\langle\psi_{n+1}|,\qquad 
 F=\sum_{n=0}^2\beta_{n}|\varphi_{n+1}\rangle\langle\psi_{n}|.
 \en
Given $f\in\mathbb{C}^4$,  we have
\be\label{effe}
EFf=\sum_{n=0}^2\beta_{n}^2\langle\psi_{n},f\rangle \varphi_{n},
\qquad
 FEf=\sum_{k=0}^2\beta_k^2\langle\psi_{k+1},f\rangle\varphi_{k+1},
 \en
so that, after some straightforward computations,
\be\label{commqef}
    [E,F]_qf=EFf-qFEf=\1_3f-4Qf,
\en
 where we have introduced the operator $Q$, say 
 $$
 Q=\frac{1}{4}\left((1-\beta_0^2)|\varphi_0\rangle\langle\psi_0|+\sum_{l=1}^2(1-\beta_l^2+q\beta_{l-1}^2)|\varphi_l\rangle\langle\psi_l|+(1+q\beta_{2}^2)|\varphi_3\rangle\langle\psi_3|\right),
 $$
 or, in matrix form,
 $$
 Q=\frac{1}{4}\left(
 \begin{array}{cccc}
  	1 -\beta_0^2 & (q+1)\beta_0^2 -\beta_1^2 & \beta_1^2-(q+1)\beta_0^2 & (q+1)\beta_0^2 -\beta_1^2\\
  	0 & 1+q\beta_0^2 -\beta_1^2 & (q+1)\beta_1^2-q\beta_0^2-\beta_2^2 & q\beta_0^2-(q+1)\beta_1^2+\beta_2^2\\
  	0 & 0 & 1+q\beta_1^2-\beta_2^2 & (q+1)\beta_2^2 -q\beta_1^2\\
  	0 & 0 & 0 & 1+q\beta_2^2
  \end{array}\right).
 $$
The operators $E$ and $F$ satisfy  the commutation rule in (\ref{32}), $[E,F]_q=\1_3-4Q$. Next we want to show that $E$ and $F$ are related to an operator $C$ (and its adjoint), obeying (\ref{25}), as in (\ref{31}). For that, we first observe that $\F_\varphi$ and $\F_\psi$ are Riesz bases. Indeed, we have $\varphi_n=R\,e_n$ and $\psi_n=\left(R^{-1}\right)^\dagger\,e_n$, where $\F_e=\{e_k\}$ is the canonical o.n. basis in $\Hil_3$, and $R$ is the invertible operator 
\be
\label{opr}
R=\left(
\begin{array}{cccc}
    1 &  1 & 0 & 0\\
    0 & 1 &  1 &0\\
    0 & 0 & 1 &  1\\
    0 & 0&0 &1
\end{array}\right).
\en

We can now determine the operator $K$ inverting the original formula for $Q$, $Q=RKR^{-1}$:
\be
\label{opk}
K=R^{-1}QR=\frac{1}{4}
\left(
\begin{array}{cccc}
        1-\beta_0^2 & 0 & 0 &  0\\
        0 & 1+q\beta_0^2-\beta_1^2 & 0 & 0\\
        0 & 0 & 1+q\beta_1^2-\beta_2^2 & 0\\
        0 & 0 & 0 & 1+q\beta_2^2
\end{array}\right).
\en
Now, since
\be\label{opc}
C=\sum_{n=0}^2\beta_n|e_n\rangle\langle e_{n+1}|,
\en
we can rewrite (\ref{opk}) as $K=\frac{1}{4}\left(\1_3-CC^\dagger+qC^\dagger C\right)$, which is exactly formula (\ref{25}). Notice also that $E=RCR^{-1}$ and $F=RC^\dagger R^{-1}$.
\end{example}

\vspace{2mm}

Going back now to our definition of $Q$, $Q=RKR^{-1}$, in the easiest case in which $[K,R]=0$ then it follows that $Q=K$. Hence,
\be
\label{32b}
[D,G]_q=\1_L-(L+1)K, 
\en
which also implies that $[G^\dagger,D^\dagger]_q=\1_L-(L+1)K$.  We are interested in finding conditions such that the analogous of (\ref{28}) and (\ref{29}) hold true. In other words, we want to understand here when the following commutators are satisfied:
\be
\label{33}
[\tilde{N},D]=-D+(1-q)\tilde{N}D,\qquad [\tilde{N}^\dagger,G^\dagger]=-G^\dagger+(1-q)\tilde{N}^\dagger G^\dagger,
\en
 and
\be
\label{34}
[\tilde{N},D]=-D,\qquad [\tilde{N}^\dagger,G^\dagger]=-G^\dagger.
\en
The equations in (\ref{33}) are satisfied if $KD=0$ and $KG^\dagger=0$, which are both true since $KC=0$. For $K$ as in (\ref{210}), it is clear that $R$ must admit the representation
\be
R=\left(
\begin{array}{cccccc}
    R_{1,1} & R_{1,2} & R_{1,3} & \ldots &R_{1,L}& 0\\
    R_{2,1} & R_{2,2} & R_{2,3} & \ldots & R_{2,L}& 0\\
    R_{3,1} & R_{3,2} & R_{3,3} & \ldots & R_{3,L}&0\\
    \vdots &\vdots & \vdots  &\ddots & \vdots&\vdots\\
    R_{L,1} & R_{L,2} & R_{L,3} & \ldots & R_{L,L}&0\\
    0 & 0& 0 &\ldots &0 & R_{L+1,L+1}
\end{array}\right),
\en
with $\det (R)\neq0$, as $R^{-1}$ must exist.

If we rather consider (\ref{34}), then a solution is given by imposing that  $\displaystyle KD=\frac{q-1}{L+1}\tilde{N}D$ and $\displaystyle KG^\dagger=\frac{q-1}{L+1}\tilde{N}^\dagger G^\dagger$, which are true since $\displaystyle KC=\frac{q-1}{L+1}N_CC$. 

In this case, $K$ has the expression in (\ref{215}), and $R$ must have the form
\be
R=\left(
\begin{array}{cccccc}
    R_{1,1} & 0 & 0 & \ldots &0& 0\\
    0 & R_{2,2} & & \ldots & 0& 0\\
    0 & 0 & R_{3,3} & \ldots & 0&0\\
    \vdots &\vdots & \vdots  &\ddots & \vdots&\vdots\\
    0 & 0 & 0 & \ldots & R_{L,L}&0\\
    0 & 0& 0 &\ldots &0 & R_{L+1,L+1}
\end{array}\right),
\en
with all the elements $R_{j,j}\neq0$, in order to have $\det(R)\neq0$.

\section{Ladders for a closed chain}\label{sect4}

In this Section, we will construct and study ladder operators which appear to be very different from those considered above. In particular, we focus on a closed chain, \emph{i.e.}, an $(N+1)$-levels system for which a ladder (or shift) operator can be defined mapping the $j$-th to the $(j+1)$-th level, $j=0,1,2,\ldots,N-1$, and mapping the $N$-th back into the zero level. For concreteness, we will choose $N=3$ here, so that $\Hil_3=\mathbb{C}^4$. The generalization to higher values of $N$ is straightforward, although the formulas become rapidly longer.

Let $\F_e=\{e_j, \,j=0,1,2,3\}$ be the canonical o.n. basis for $\Hil_3$, $\langle e_j,e_k\rangle=\delta_{j,k}$, and let $\{\gamma_j, \,j=0,1,2,3\}$ be a set of real and positive numbers. Let us consider an operator $a^\dagger$ on $\Hil_3$ acting on $\F_e$ as follows:
\be
a^\dagger e_0=\gamma_1e_1, \quad a^\dagger e_1=\gamma_2e_2, \quad a^\dagger e_2=\gamma_3e_3, \quad a^\dagger e_3=\gamma_0e_0.
\label{cc1}
\en 
Its matrix expression is 
\be
\label{cc2}
 a^\dagger=\left(
\begin{array}{cccc}
0 & 0 &0&\gamma_0\\
\gamma_1&0&0&0 \\
0&\gamma_2&0&0\\
0&0&\gamma_3&0
\end{array}\right) \quad \hbox{so that} \quad
a=
\left(
\begin{array}{cccc}
0 & \gamma_1 &0&0\\
0&0&\gamma_2&0 \\
0&0&0&\gamma_3\\
\gamma_0&0&0&0
\end{array}\right).
\end{equation}
Therefore
\be
a e_0=\gamma_0e_3, \quad a e_1=\gamma_1e_0, \quad a e_2=\gamma_2e_1, \quad a e_3=\gamma_3e_2.
\label{cc3}
\en 
The reader should be aware that the operator $a$ here has nothing to do (except for its ladder nature) with the bosonic operator $a$ introduced in Section \ref{sect2}, and that's why we decided to keep this notation.
We see that, with this choice, no vector $e_j$ is annihilated by either $a$ or $a^\dagger$. This is different from what happens for bosons and quons, truncated or not, since they all have a vacuum (for instance, see (\ref{22})), or for fermions, where both the zero and the excited levels are annihilated by $b$ and $b^\dagger$, respectively, where $\{b,b^\dagger\}=\1$. Our interest in the operators $a$ and $a^\dagger$  in (\ref{cc2}) is based on the following aspects. First of all, we are interested in describing a system with no vacuum, which appears as a finite dimensional version of what has been recently considered in \cite{bagaop2024} in connection with graphene and coherent states, or in \cite{kow} for a quantum particle on a circle. Secondly, we want to check if it is possible to construct some sort of coherent state for the operator $a$, even in our finite-dimensional settings. This is well known to be impossible as long as a vacuum exists, since an equality $a\Phi(z)=z\Phi(z)$ could never be satisfied by any non-zero $\Phi(z)$ and for all $z\in\mathbb{C}$. The reason is simple\footnote{We sketch the proof in our $\Hil_3=\mathbb{C}^4$. It is trivial to generalize it to any dimension $N<\infty$.}. As $\F_e$ is an o.n. basis for $\Hil_3$, it is clear that we can expand $\Phi(z)$ as follows: $\Phi(z)=\sum_{k=0}^3c_k(z)e_k$, where $c_k(z)$ are $z$-dependent coefficients. If $A$ is a generic lowering operator for $\F_e$, and if $Ae_0=0$, it is clear that $A\Phi(z)$ is a linear combination of $e_0,e_1$ and $e_2$, but not of $e_3$. On the other hand, $\displaystyle z\Phi(z)=\sum_{k=0}^3z\,c_k(z)e_k$. It is easy to check that the only solution,  
independently of the details of action of the lowering operator $A$, is that $c_k(z)=0$ for all $k$. Hence, 
$\Phi(z)=0$, which is not allowed\footnote{The obvious (and well known) way out is to work in an infinite dimensional Hilbert space. In this case, a non trivial solution can be usually found. This is the case, just to cite an example, of the bosonic lowering operator.}. Last but not least, we would like to consider new operators that can be eventually used in some dynamical system, on the same lines it has been done in recent years in many applications where ladder operators of different nature appear in the analysis of macroscopic systems, \cite{bagbook1,bagbook2,bagbook3}. 

Going back to the matrices $a$ and $a^\dagger$ in (\ref{cc2}), we see that
\be
N=a^\dagger a=\left(
\begin{array}{cccc}
	\gamma_0^2 & 0 &0&0\\
	0&\gamma_1^2&0&0 \\
	0&0&\gamma_2^2&0\\
	0&0&0&\gamma_3^2
\end{array}\right),
\label{cc4}
\en
which is diagonal despite of the presence of the $\gamma_0$ in $a$ and $a^\dagger$. The commutator between these two operators is also diagonal,
\be
[a,a^\dagger]=\left(
\begin{array}{cccc}
	\gamma_1^2-\gamma_0^2 & 0 &0&0\\
	0&\gamma_2^2-\gamma_1^2&0&0 \\
	0&0&\gamma_3^2-\gamma_2^2&0\\
	0&0&0&\gamma_0^2-\gamma_3^2
\end{array}\right)=:\Gamma
\label{cc5}
\en
Incidentally, we observe that $[a,a^\dagger]\neq\1$ for any choice of the $\gamma_j$'s. This is not a surprise, as it is in agreement with well known facts for bosonic operators: no operator $X$ can satisfy $[X,X^\dagger]=\1$ in any finite dimensional Hilbert space. However, these operators share some interesting features with ordinary bosons. Indeed, we have $[N,a]=-\Gamma\,a$ and $[N,a^\dagger]=a^\dagger\Gamma$. We also see that $[N,\Gamma]=[M,\Gamma]=0$, where $M=aa^\dagger=\Gamma+N$. These commutators are similar to the analogous results we get for bosons, as we can check by replacing $\1$ with $\Gamma$. In particular, if we call $a(t)=\exp(iNt)a\exp(-iNt)$ (as a sort of Heisenberg dynamics for $a$, driven by the ``Hamiltonian'' $H=N$), it is easy to see that
$$
\frac{da(t)}{dt}=i\exp(iNt)[N,a]\exp(-iNt)=-i\Gamma a(t) \qquad\Rightarrow\qquad a(t)=\exp(-i\Gamma t) a,
$$
as $\Gamma(t)=\exp(iNt)\Gamma \exp(-iNt)=\Gamma$. This implies that $a^\dagger(t)a(t)=a^\dagger a$, as it is obvious because $[H,N]=0$, if $H=N$.

A simple way to see that our ladder operator $a$ describes a closed chain consists in observing that
\be
(a^\dagger)^4=\gamma_0\gamma_1\gamma_2\gamma_3\1=a^4,
\label{cc6}\en
which implies that the fourth power of both $a$ and $a^\dagger$ are simply proportional to the identity operator: we have just gone all over the chain, going back to the initial state. 

\begin{remark} It is easy to understand that the operator $a^\dagger$ can be expressed in terms of a family $(b_j,b_j^\dagger )$ of fermionic operators as follows: let $b_j$ be such that $\{b_j,b_k^\dagger\}=\delta_{j,k}\1$, and $b_j^2=0$, $j=0,1,2,3$, with all the other anti-commutators being zero. The operator $b_j$ annihilates a particle in the $j$-th level, while $b_j^\dagger$ creates such a particle. Hence, we can rewrite
$$
a^\dagger=\sum_{j=0}^3 b_jb_{j+1}^\dagger,
$$ 
with the identification $b_4^\dagger\equiv b_0^\dagger$; $a^\dagger$ and $\sum_{j=0}^3 b_jb_{j+1}^\dagger$ behave effectively in the same way. Anyway, it is clear that the use of $a$ is much more efficient than adopting the $b_j$ operators.
\end{remark}

\begin{remark} 
Another possible choice for $a$ in (\ref{cc2}) satisfying  (\ref{cc6}) was already considered in the Remark~\ref{remark1}. In fact, setting 
$$
a=\frac{1}{\sqrt{1-q}}\left(
\begin{array}{llll}
    0 & 1 & 0 & 0\\
    0 & 0 & 1 & 0\\
    0 & 0 & 0 & 1\\
    1 & 0 & 0 & 0
\end{array}\right),
$$
$a$ satisfies (\ref{cc6}) if $q$ is such that  $\gamma_0\gamma_1\gamma_2\gamma_3=\frac{1}{(1-q)^2}$. Notice that this is surely possible whenever $\gamma_0\gamma_1\gamma_2\gamma_3\geq\frac{1}{4}$. 
\end{remark}

It is clear that all the results we have deduced here can be easily generalized to $\Hil_L$ for $L\neq3$. Notice that, here, $L+1$ is just the length of the closed chain we are interested to.

\subsection{Discrete coherent states for $a$}

We have already shown that, for a purely lowering operator in a finite dimensional vector space, it is not possible to find a vector which could be called {\em a coherent state}. Hereafter, we propose a general proof of this statement, together with some related results. In the last part of this Section, we go back to our ladder operators $a$ and $a^\dagger$, and analyze in some details what Theorem \ref{thm1} below implies.

Let $A$ be a general $n\times n$ matrix (no other requirement: in particular, we are not assuming $A$ is a lowering operator as we did before). Our general question is the following: is it possible to find a $z$-dependent non-zero vector $\Phi(z)\in\Hil$, $\Hil=\mathbb{C}^n$, such that $A\Phi(z)=z\Phi(z)$ for all $z\in\mathbb{C}$? Of course, we cannot expect the answer be positive in general, as it is well known that coherent states for fermionic operators ($n=2$ and $A=\sigma^+$, one of the Pauli matrices) can only be defined if $z$ is replaced by some Grassmann number \cite{gazeaubook}. Indeed, this is a general aspect, which does not depend on the specific form of the matrix $A$.

\begin{thm}\label{thm1}
	The only solution of the identity $A\Phi(z)=z\Phi(z)$, $\forall z\in\mathbb{C}$, is $\Phi(z)=0$. 
However, it is possible to find $z_1,z_2,\ldots,z_n$, not necessarily different, and $n$ non-zero vectors $\Phi(z_j)$, $j=1,2,\ldots,n$, such that
\be
A\Phi(z_j)=z_j\Phi(z_j), 
\label{sc1}\en
$j=1,2,\ldots,n$. Moreover, if these $z_j$ are all different, $\F_\Phi=\{\Phi(z_j), \,j=1,2,\ldots,n\}$ is a basis for $\Hil$. Its biorthonormal basis, $\F_\psi=\{\psi(z_j), \,j=1,2,\ldots,n\}$, is a set of eigenstates of $A^\dagger$ with eigenvalues $\overline{z_j}$:
\be
A^\dagger\psi(z_j)=\overline{z_j}\psi(z_j), 
\label{sc2}\en
$j=1,2,\ldots,n$. 

\end{thm}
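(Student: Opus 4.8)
The plan is to separate the statement into three independent claims and dispatch each with elementary linear algebra over $\mathbb{C}$. For the first claim, I would argue exactly as in the footnote sketch preceding the theorem, but without reference to any ladder structure. Suppose $A\Phi(z)=z\Phi(z)$ held for \emph{all} $z\in\mathbb{C}$ with a single fixed non-zero $\Phi$. Fixing two distinct values $z_1\neq z_2$ gives $A\Phi=z_1\Phi$ and $A\Phi=z_2\Phi$, hence $(z_1-z_2)\Phi=0$, forcing $\Phi=0$. (One may even note that demanding it for all $z$ would make $\Phi$ a common eigenvector for every scalar simultaneously, which is impossible.) This is the easy direction and needs no hypothesis on $A$ beyond linearity.

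For the second claim, the key is that $A$ is a matrix over the algebraically closed field $\mathbb{C}$, so its characteristic polynomial $\det(A-z\id)$ has exactly $n$ roots $z_1,\dots,z_n$ counted with multiplicity, and to each root there corresponds at least one non-zero eigenvector $\Phi(z_j)$ satisfying (\ref{sc1}). This is just the fundamental theorem of algebra plus the definition of an eigenvalue; the phrase \emph{not necessarily different} simply acknowledges repeated roots. The only subtlety to flag is that when eigenvalues repeat, the corresponding eigenvectors need not span $\Hil$, which is precisely why the basis conclusion is stated only under the hypothesis that the $z_j$ are all distinct.

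For the third claim, I would invoke the standard fact that eigenvectors of a matrix belonging to distinct eigenvalues are linearly independent; with $n$ distinct $z_j$ this yields $n$ independent vectors $\Phi(z_j)$ in an $n$-dimensional space, so $\F_\Phi$ is a basis. Its unique biorthogonal (dual) basis $\F_\psi$ is then defined by $\langle\varphi_{\text{dual}},\Phi\rangle$-type relations $\langle\psi(z_j),\Phi(z_k)\rangle=\delta_{j,k}$. To identify the $\psi(z_j)$ as eigenvectors of $A^\dagger$, I would compute $\langle A^\dagger\psi(z_j),\Phi(z_k)\rangle=\langle\psi(z_j),A\Phi(z_k)\rangle=z_k\langle\psi(z_j),\Phi(z_k)\rangle=z_k\,\delta_{j,k}$, which equals $\overline{z_j}$ times $\delta_{j,k}$ (the conjugate entering through the inner product's antilinearity on the first slot). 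Since $\F_\Phi$ is a basis and the functionals $\langle\,\cdot\,,\Phi(z_k)\rangle$ separate vectors, this shows $A^\dagger\psi(z_j)-\overline{z_j}\psi(z_j)$ is orthogonal to every basis vector, hence zero, giving (\ref{sc2}).

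The main obstacle is not computational but conceptual: one must be careful that biorthogonality is only guaranteed when $\F_\Phi$ is genuinely a basis, i.e. under the distinct-eigenvalue hypothesis, and that the conjugation $\overline{z_j}$ in (\ref{sc2}) arises correctly from the sesquilinearity convention used throughout (antilinear in the first argument). A secondary point to handle cleanly is that distinct eigenvalues guarantee a basis of eigenvectors but do \emph{not} make $A$ normal, so $\F_\psi\neq\F_\Phi$ in general and the two families are genuinely biorthogonal rather than orthonormal; this is exactly the structure the paper wants for its discrete coherent states.
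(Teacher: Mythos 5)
Your handling of the second and third claims coincides essentially step by step with the paper's own proof: the characteristic polynomial $p(z)=\det(A-z\1)$ supplies the $n$ eigenvalues with multiplicity and hence the non-zero eigenvectors; distinct eigenvalues give linear independence and hence a basis $\F_\Phi$ with a unique biorthogonal basis $\F_\psi$; and the identification of $\psi(z_j)$ as an eigenvector of $A^\dagger$ is obtained, exactly as in the paper, by testing $A^\dagger\psi(z_j)-\overline{z_j}\psi(z_j)$ against the complete family $\F_\Phi$ (the paper merely writes the same sesquilinear computation with the two slots interchanged). One phrasing point there: $z_k\delta_{j,k}$ equals $z_j\delta_{j,k}$, not $\overline{z_j}\delta_{j,k}$; the conjugate appears only when you rewrite $z_j\delta_{j,k}$ as $\langle \overline{z_j}\psi(z_j),\Phi(z_k)\rangle$ using antilinearity in the first slot, which your parenthetical does acknowledge, so the conclusion stands.

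There is, however, a genuine gap in your argument for the first claim. The theorem, and the discussion preceding it, concern a $z$-dependent vector $z\mapsto\Phi(z)$, so $\Phi(z_1)$ and $\Phi(z_2)$ need not be the same vector; your argument fixes a single $\Phi$ and subtracts $A\Phi=z_1\Phi$ from $A\Phi=z_2\Phi$, a step that is simply unavailable for a $z$-dependent family. Indeed a non-zero solution does exist for each of the $n$ values $z\in\sigma(A)$ --- that is precisely the content of the second claim --- so the statement to be proved is that $\Phi(z)$ is forced to vanish for at least one (in fact for all but finitely many) $z$. The paper's route is the correct one: for any $\tilde z\notin\sigma(A)$ one has $p(\tilde z)\neq0$, so $A-\tilde z\1$ is invertible and $(A-\tilde z\1)\Phi(\tilde z)=0$ forces $\Phi(\tilde z)=0$; since $p$ has only $n$ roots, such $\tilde z$ exist, and the family cannot be non-zero for all $z$. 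You already have every ingredient of this argument in your paragraph on the second claim, so the repair is immediate, but as written the first part of your proof establishes only the weaker statement about a $z$-independent $\Phi$.
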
 

\begin{proof}
First we observe that, rewriting $A\Phi(z)=z\Phi(z)$ as $(A-z\1)\Phi(z)=0$, this admits a non-trivial solution only if $p(z)=\det(A-z\1)=0$. But $p(z)$ is a polynomial in $z$ of degree $n$. Hence, it has exactly $n$ roots, $z_1,z_2\ldots,z_n$ not necessarily all different. For all $\tilde z\in\mathbb{C}$ not in $\sigma(A)=\{z_1,z_2\ldots,z_n\}$, the set of eigenvalues of $A$, $p(\tilde z)\neq0$, and the only solution of $A\Phi(\tilde z)=\tilde z\Phi(\tilde z)$ is $\Phi(\tilde z)=0$: hence, $A\Phi(z)=z\Phi(z)$ does not admit a non trivial solution $\Phi(z)$ for all $z\in\mathbb{C}$. 

Now, let us assume that all the eigenvalues $z_j$ of $A$ are distinct. Since we are not assuming that $A=A^\dagger$, each $z_j$ is in general complex. If they are all different, the $n$ eigenvectors in $\F_\Phi$ are all linearly independent. Hence, they form a basis for $\Hil$. Thus \cite{chri,heil}, there exists a unique biorthonormal basis  $\F_\psi=\{\psi(z_j), \,j=1,2,\ldots,n\}$. Then we have $\langle\Phi(z_j),\psi(z_k)\rangle=\delta_{j,k}$, and therefore
\begin{eqnarray*}
 \langle \Phi(z_j),A^\dagger\psi(z_k)\rangle&=&\langle A\Phi(z_j),\psi(z_k)\rangle= \overline{z_j}\langle\Phi(z_j),\psi(z_k)\rangle=\overline{z_j}\delta_{j,k}=\\
&=&\overline{z_k}\delta_{j,k}=\langle \Phi(z_j),\overline{z_k}\psi(z_k)\rangle, \qquad \forall j,k=1,2,\ldots,n.
\end{eqnarray*}
This implies, in particular, that $\langle \Phi(z_j),A^\dagger\psi(z_k)-\overline{z_k}\psi(z_k)\rangle=0$, for all $j$. Formula (\ref{sc2}) now follows from the fact that $\F_\Phi$ is complete.

\end{proof}

From this theorem we conclude that each vector $f\in\Hil$ can be written as
\be
f=\sum_{j=1}^n\langle\Phi(z_j),f\rangle \psi(z_j)=\sum_{j=1}^n\langle\psi(z_j),f\rangle \Phi(z_j),
\label{sc3}\en
or, equivalently, that
\be
\sum_{j=1}^n|\psi(z_j)\rangle\langle \Phi(z_j)|=\sum_{j=1}^n|\Phi(z_j)\rangle\langle \psi(z_j)|=\1.
\label{sc4}\en
The interpretation is that, under the assumption of the theorem and independently of the fact that $A$ is a ladder operator or not, a sort of \emph{discrete set of coherent states} can be found, and this set, together with its biorthogonal family, resolves the identity in $\Hil$.

\begin{remark} It might appear not so natural to call these vectors coherent states, since these are states with very specific properties \cite{gazeaubook}. However, we like to keep this name, in view of the application to the chain, and to the fact that also \emph{ordinary} coherent states obey a discrete resolution of the identity \cite{zak}.
\end{remark}

Let us now go back to the operators $a$ and $a^\dagger$ in (\ref{cc2}).  The eigenvalues for $a$ are 
\be
\label{sc5}
E_0=-\gamma, \qquad E_1=-i\gamma, \qquad E_2=i\gamma, \qquad E_3=\gamma,
\en
where $\gamma=\sqrt[4]{\gamma_0\gamma_1\gamma_2\gamma_3}>0$. The corresponding eigenvectors are
\be
\label{sc6}
\begin{array}{l}
\displaystyle\Phi(z_0)=\frac{1}{2}\left(
-\frac{\gamma}{\gamma_0},
\frac{\gamma^2}{\gamma_0\gamma_1},
-\frac{\gamma_3}{\gamma},
1\right)^T, \\
\displaystyle\Phi(z_1)=\frac{1}{2}\left(
-i\frac{\gamma}{\gamma_0},
-\frac{\gamma^2}{\gamma_0\gamma_1},
i\frac{\gamma_3}{\gamma},
1\right)^T, \\ 
\displaystyle\Phi(z_2)=\frac{1}{2}\left(
	i\frac{\gamma}{\gamma_0},
-\frac{\gamma^2}{\gamma_0\gamma_1},
-i\frac{\gamma_3}{\gamma},
1
\right)^T, \\ 
\displaystyle\Phi(z_3)=\frac{1}{2}\left(
	\frac{\gamma}{\gamma_0},
\frac{\gamma^2}{\gamma_0\gamma_1},
\frac{\gamma_3}{\gamma},
1
\right)^T. 
\end{array}
\en

The biorthogonal set $\F_\psi=\{\psi(z_j)\}$ is the set of the following vectors:
\be
\label{sc7}
\begin{array}{l}
\displaystyle\psi(z_0)=\frac{1}{2}\left(
	-\frac{\gamma_0}{\gamma},
	\frac{\gamma_0\gamma_1}{\gamma^2},
	-\frac{\gamma}{\gamma_3},
	1
\right)^T, \\ 
\displaystyle\psi(z_1)=\frac{1}{2}\left(
i\frac{\gamma_0}{\gamma},
-\frac{\gamma_0\gamma_1}{\gamma^2},
-i\frac{\gamma}{\gamma_3},
1
\right)^T, \\ 
\displaystyle\psi(z_2)=\frac{1}{2}\left(
-i\frac{\gamma_0}{\gamma},
-\frac{\gamma_0\gamma_1}{\gamma^2},
i\frac{\gamma}{\gamma_3},
1
\right)^T, \\ 
\displaystyle\psi(z_3)=\frac{1}{2}\left(
\frac{\gamma_0}{\gamma},
\frac{\gamma_0\gamma_1}{\gamma^2},
\frac{\gamma}{\gamma_3},
1
\right)^T.
\end{array} 
\en
It is possible to check that $a^\dagger\psi(z_j)=\overline{E_j}\psi(z_j)$, $j=0,1,2,3$, in agreement with 
Theorem~\ref{thm1}, and that $\F_\Phi$ and $\F_\Psi$, together, resolve the identity, as in (\ref{sc4}).

\vspace{2mm}

\begin{remark}

It is not difficult to extend this construction to higher dimensions, i.e. to operators $a$ and $a^\dagger$ that, despite of formulas in (\ref{cc2}), are represented by larger $M\times M$ matrices, as far as they still represent ladders on a closed chain. Of course, they will act on some o.n. basis which is made by $M$ elements, in a way which extends the one in (\ref{cc1}).

\end{remark} 

\vspace{2mm}

We could expand these vectors in terms of the canonical orthonormal basis $\F_e=\{e_j, \,j=0,1,2,3\}$. For instance, we have 
$$
\Phi(z_0)=\frac{1}{2}\left(-\frac{\gamma}{\gamma_0}e_0+\frac{\gamma^2}{\gamma_0\gamma_1}e_1-\frac{\gamma}{\gamma_3}e_2+e_3\right),
$$ 
so that, using (\ref{cc4}),
$$
N\Phi(z_0)=\frac{1}{2}\left(-\gamma\gamma_0 e_0+\frac{\gamma^2\gamma_1}{\gamma_0}e_1-\frac{\gamma\gamma_2^2}{\gamma_3}e_2+\gamma_3^2e_3\right).
$$
This result suggests that $\Phi(z_0)$ is not stable under the time evolution produced by $N$, \emph{i.e.}, under the operator $\exp(iNt)$. However, this does not exclude that $\exp(iNt)\Phi(z_0)$ can coincide, or be proportional, to some of the other vectors in (\ref{sc6}) or in (\ref{sc7}) for some specific time $t$. This is another (big) difference with what people usually call coherent states, since these are stable under time evolution \cite{gazeaubook}.

Another difference with the {\em canonical} coherent states (as those attached to the harmonic oscillator), is that, here, both the lowering and the raising operators, $a$ and $a^\dagger$, admit eigenstates belonging to the Hilbert space where the operators act, and that these two families are biorthonormal. This is due to the fact that we are working with a simplified functional settings, \emph{i.e.}, in a finite dimensional Hilbert space. This, as already commented before, produces several interesting consequences.

\section{Conclusions}\label{sect5}

In this paper, we have considered different classes of finite-dimensional matrices obeying interesting algebraic relations, and with interesting physical interpretations. In particular, we have considered first a truncated version of quons, and discussed some of their possible deformations, possibly useful for non-Hermitian Quantum Mechanics. Then, we have introduced and analyzed ladder operators which are defined on certain closed chains of finite length. These latter operators can be seen as discrete translation operators, moving a given particle from one site of the chain to another. 

We have further considered the general problem of the existence of a sort of coherent state for our lowering operators, and we have deduced that this is possible, but in a slightly different way with respect to what it is usually done for canonical coherent states. Indeed,  we found a finite set of vectors which are eigenstates of the lowering operator $A$ and which are (under mild assumptions) a basis for $\Hil$. Its (unique) biorthonormal basis turns out to be a set of eigenstates of $A^\dagger$. Together, these two families resolve the identity.

In a forthcoming paper, now in preparation, we plan to use the operators considered here in a \emph{dynamical context}, \emph{i.e.}, in the construction and analysis of some specific dynamical systems described in terms of operators, on the same lines widely discussed in \cite{bagbook1,bagbook2,bagbook3}.

\section*{Acknowledgments}
F. B.  acknowledges partial financial support from Palermo University and by   project ICON-Q, Partenariato Esteso NQSTI - PE00000023, Spoke 2.. F. O.  acknowledges partial financial support from Messina University. F. B. and F. O.  acknowledges partial financial support from the G.N.F.M. of the INdAM and by the PRIN grant {\em Transport phenomena in low dimensional
	structures: models, simulations and theoretical aspects}- project code 2022TMW2PY - CUP B53D23009500006. 
A. F. acknowledges partial financial support from Palermo University for his visit in Messina, and all the people in Messina University for their warm welcome during his days in the MIFT Department.

\appendix

\section{Quons}
\label{appendix} 

In this appendix, we briefly review few facts on  quons, starting with their definition.

Quons are defined  by  {\it q-mutator} 
\be 
\label{a1}
[c,c^\dagger]_q:=c c^\dagger -q c^\dagger c=\1, \qquad q\in [-1,1],  
\end{equation}
where the creation and the annihilation operators are $c^\dagger$
and $c$ and of course  the CCR are obtained for $q=1$, while  CAR (canonical anticommutation relations) for $q=-1$. Note that the condition $c^2=0$ is not required here a priori, so that it should be added, when $q=-1$. For  $q$ in
the interval $(-1,1)$, we have that (\ref{a1}) describes particles which are neither
bosons nor fermions.  

In \cite{moh}, it is proved that the eigenstates of $N_0=c^\dagger\,c$ are analogous to the bosonic ones, except that for the normalization. A simple concrete realization of  (\ref{a1}) can be deduced as follows.

Let $\mathcal{F}_e=\{e_k, \, k=0,1,2,\ldots\}$ be the canonical orthonormal  basis of the Hilbert space $\Hil=l^2(\mathbb N_0)$, with all zero entries except that in the $(k+1)$th position, which is equal to one: 
\be
\langle e_k,e_m\rangle=\delta_{k,m}.
\en  
If we take
\be
c=\left(
\begin{array}{ccccccc}
0 & \beta_0 & 0 & 0 & 0 & 0 & \cdots \\
0 & 0 & \beta_1 & 0 & 0 & 0 & \cdots \\
0 & 0 & 0 & \beta_2 & 0 & 0 & \cdots \\
0 & 0 & 0 & 0 & \beta_3 & 0 & \cdots \\
0 & 0 & 0 & 0 & 0 & \beta_4  & \cdots \\
\cdots & \cdots & \cdots & \cdots & \cdots & \cdots & \cdots \\
\cdots & \cdots & \cdots & \cdots & \cdots & \cdots & \cdots \\
\end{array}
\right),
\label{a2}
\en
then (\ref{a1}) is satisfied if 
\be
\beta_0^2=1 \qquad \hbox{and} \qquad \beta_n^2=1+q\beta_{n-1}^2, \qquad  \forall n\geq1,
\en 
which implies that
\be
\label{a5}
\beta_n^2=\left\{
\begin{array}{ll}
n+1, \  \quad \mbox{if} \   \ q=1,\\
\\
\frac{1-q^{n+1}}{1-q}, \ \ \mbox{if} \ \ q\neq 1.
\end{array}
\right.
\en
	
In older papers the authors have often restricted to $\beta_n>0$ for all $n\geq0$. The above form of $c$ shows that  $c\,e_0=0$, and $c^\dagger$ behaves as a raising operator. From (\ref{a2}) we deduce
\be
e_{n+1}=\frac{1}{\beta_n}\,c^\dagger e_n=\frac{1}{\beta_n!}(c^\dagger)^{n+1}\,e_0,
\label{a3}
\en
for all $n\geq0$. Here we have introduced the $q$-\textit{factorial}: 
\be
\beta_n!=\beta_n\beta_{n-1}\cdots\beta_2\beta_1,
\en 
which is often indicated as ${[n]}_q$ in some literature. Of course, from (\ref{a3}) it follows that  \be
c^\dagger e_n=\beta_ne_{n+1}.
\en
Using (\ref{a2}), it is also easy to check that $c$ acts as a lowering operator on $\mathcal{F}_e$: \begin{equation}c\,e_m=\beta_{m-1}e_{m-1} \ \ \ \ \ \ \ \forall m \ge 0,\end{equation} where it is also useful to introduce $\beta_{-1}=0$, in order to ensure  $c\,e_0=0$. Then we have
\be
N_0e_m=\beta_{m-1}^2e_m, \qquad \forall m \ge 0.
\label{a4}
\en
The operator $N$ is formally defined in \cite{moh} as  
\be
N=\frac{1}{\log(q)}\,\log(\1-N_0(1-q)) \quad \hbox{for}  \ 0<q<1,
\en 
and satisfies the eigenvalue equation 
\be
Ne_m=me_m, \qquad \forall m\ \ge 0.
\en

It might be interesting to stress that there are other ways to represent the operator $c$ in (\ref{a2}). We refer to \cite{erem} for such an alternative representation in $\Lc^2(\mathbb{R})$.

\section*{References}


\begin{thebibliography}{99}
	
\bibitem{mer} E. Merzbacher, {\em Quantum Mechanics}, Wiley, New York, (1970)
	
\bibitem{mess} A. Messiah, {\em Quantum mechanics}, vol. 2, North Holland Publishing Company, Amsterdam, (1962)
			
\bibitem{cohen} C. Cohen-Tannoudji, B. Diu, F. Lalo\"e, {\em Quantum mechanics}, vol. 1, Wiley-VCH, (1977)
	
	
\bibitem{gazeaubook}  J-P.  Gazeau, {\em Coherent states in quantum physics}, Wiley-VCH, Berlin (2009)
	

\bibitem{moh} R.N. Mohapatra, {\em Infinite statistics and a possible small
violation of the Pauli principle}, Phys. Lett. B, {\bf 242}, 407-411, (1990)

\bibitem{gre} O.W. Greenberg, {\em Particles with small violations of Fermi
or Bose statistics}, Phys. Rev. D, {\bf 43}, 4111-4120, (1991)


\bibitem{fern} F. M. Fernandez, {\em Symmetric quadratic Hamiltonians with pseudo-Hermitian matrix representation}, Ann. Phys. {\bf 369}, 168-176 (2016)

\bibitem{curado} E. M. F. Curado, Y. Hassouni, M. A. Rego-Monteiro, Ligia M.C.S. Rodrigues, {\em Generalized Heisenberg algebra and algebraic method: The example of an infinite square-well potential},  Physics Letters A,  {\bf 372},  3350-3355 (2008)

\bibitem{buc} H. A. Buchdahl, {\em Concerning a kind of truncated quantized linear harmonic oscillator}, Amer. Jour. of Phys., {\bf 35}, 210 (1967)
	
\bibitem{bagchi} B. Bagchi, S.N. Biswas, A. Khare, P.K. Roy,   {\em Truncated harmonic oscillator and parasupersymmetric quantum mechanics}, Pramana {\bf 49}, (2), 199-204 (1997)

\bibitem{bag2013} F. Bagarello, S. T. Ali, J. P. Gazeau, {\em Extended pseudo-fermions from non commutative bosons},
Journal of Math. Phys.,  {\bf 54}, 073516, (2013)

\bibitem{bagbook} F. Bagarello, {\em Pseudo-Bosons and Their Coherent States}, Springer, Mathematical Physics Studies, 2022
	
\bibitem{bagaop2024} F. Bagarello,  {\em Ladder operators with no vacuum, their coherent states, and an application to graphene},   Ann. of Phys., {\bf 462}, 169605
(2024)
	
\bibitem{kow} K. Kowalskiy, J. Rembielinski, L. C. Papaloucas, {\em Coherent states for a quantum particle on a circle}, J. Phys. A, {\bf 29}, 4149-4167 (1996)
	
\bibitem{bagbook1} F. Bagarello, {\em Quantum dynamics for classical systems: with applications of the
Number operator}, Wiley  (2012)
	
\bibitem{bagbook2} F. Bagarello, {\em Quantum Concepts in the Social, Ecological and Biological Sciences}, Cambridge University Press, 2019
	
\bibitem{bagbook3} F. Bagarello, F. Gargano, F. Oliveri, {\em Quantum Tools for Macroscopic Systems}, Springer. Synthesis Lectures on Mathematics $\&$ Statistics, 2023
	
\bibitem{chri} O. Christensen, {\em An Introduction to Frames and Riesz Bases}, Birkh\"auser, Boston, (2003)
	
\bibitem{heil} C. Heil, {\em A basis theory primer: expanded edition}, Springer, New York, (2010)
	
\bibitem{zak} H. Bacry, A. Grossmann, and J. Zak, {\em Proof of the
completeness of lattice states in
$kq$ representation},  Phys. Rev. B, {\bf 12},  1118--1120,  (1975)
	
 \bibitem{erem} V. V. Eremin, A. A. Meldianov, {\em The q-deformed harmonic oscillator, coherent states and
 the uncertainty relation}, Theor. and Math. Phys., {\bf 147}(2): 709-715, (2006)
\end{thebibliography}
\end{document}